\newtheorem{theorem}{Theorem}
\newtheorem{lemma}{Lemma}
\newtheorem{assumption}{Assumption}
\newtheorem{proof}{Proof}
\begin{document}
%
\title{PPT: A Privacy-Preserving Global Model Training Protocol for Federated Learning in P2P Networks}
%
%
%

\author{Qian~Chen,
	Zilong~Wang,~\IEEEmembership{Member,~IEEE},
	Wenjing~Zhang,
	and~Xiaodong~Lin,~\IEEEmembership{Fellow,~IEEE}
	
	\thanks{Qian~Chen and Zilong~Wang are with the State Key Laboratory of Integrated Service Networks, School of Cyber Engineering, Xidian University, Xi'an,
		China (e-mail: xidianqianchen@gmail.com; zlwang@xidian.edu.cn).}
	\thanks{Wenjing~Zhang and Xiaodong~Lin are with the School of Computer Science, University of Guelph, Guelph, Canada (e-maill: wzhang25@uoguelph.ca; xlin08@uoguelph.ca).}
}
\maketitle

\begin{abstract}
The concept of Federated Learning (FL) has emerged as a convergence of machine learning,  information, and communication technology. It is vital to the development of machine learning, which is expected to be fully decentralized,  privacy-preserving, secure, and robust. However, general federated learning settings with a central server can't meet requirements in decentralized environment. In this paper, we propose a decentralized, secure and privacy-preserving global model training protocol, named PPT, for federated learning in Peer-to-peer (P2P) Networks. PPT uses a one-hop communication form to aggregate local model update parameters and adopts the symmetric cryptosystem to ensure security. It is worth mentioning that PPT modifies the Eschenauer-Gligor (E-G) scheme to distribute keys for encryption. In terms of privacy preservation, PPT generates random noise to disturb local model update parameters. The noise is eliminated ultimately, which ensures the global model performance compared with other noise-based privacy-preserving methods in FL, e.g., differential privacy. PPT also adopts Game Theory to resist collusion attacks. Through extensive analysis, we demonstrate that PPT various security threats and preserve user privacy. Ingenious experiments demonstrate the utility and efficiency as well.
\end{abstract}

\begin{IEEEkeywords}
fully decentralized, federated learning, peer-to-peer networks, privacy-preserving, security, efficiency.

\end{IEEEkeywords}

%
\IEEEpeerreviewmaketitle

\section{Introduction}
%
%
%
%
\IEEEPARstart{M}{odern} machine learning (ML) is achieving unprecedented performance in natural language processing \cite{NLP15}, computer vision \cite{CV16}, data mining \cite{DM14}, etc. However, along with the growing social privacy awareness and the rapid growth of data, ML seems to be hard to break new ground. Especially, the General Data Protection Regulation (GDPR) \cite{GDPR17} enforces strict limitations on handling users' private data. Both industry and academia began to find a way performing ML with the demand of privacy. Recently, the concept of federated learning (FL) \cite{FL17} has emerged and been recognized as the state-of-the-art distributed ML system, which provides privacy, security, communication efficiency, and improved robustness. FL allows users to collectively reap the benefits of shared models trained from rich data without the need to centrally store it. Massive data containing sensitive information, such as religion, income and e-mail, never leaves the suers' devices.

A general FL system consists of two parties: a central server and a group of clients. The central server includes a coordinator and an aggregator. The aggregator aggregates the local training results and updates the global model under the control of the coordinator, as shown in Fig. 1. At the beginning of an FL model training round, the central server distributes a pre-trained model. Subsequently, clients train the model using their local data and upload their model update parameters to the aggregator. Through aggregating the model update parameters, the central server updates the global model. The way of aggregating model update parameters reduces the communication overhead greatly, compared with collecting training data. To defend against powerful attackers obtaining sensitive information by executing model inversion attacks \cite{Inversion15}, FL system adopts some privacy-preserving techniques when uploading the model update parameters. Homomorphic Encryption (HE) \cite{Rivest78, Gentry09}, Differential Privacy (DP) \cite{Dwork06}, and Secure Multi-Party Computation (MPC) \cite{Yao82} are the common privacy-preserving techniques in FL.

\begin{figure}[htbp]
	\centering
	\vspace*{-10pt}
	\hspace*{-5pt}
	\includegraphics[width=10cm,height=6cm,trim=50 80 50 50,clip]{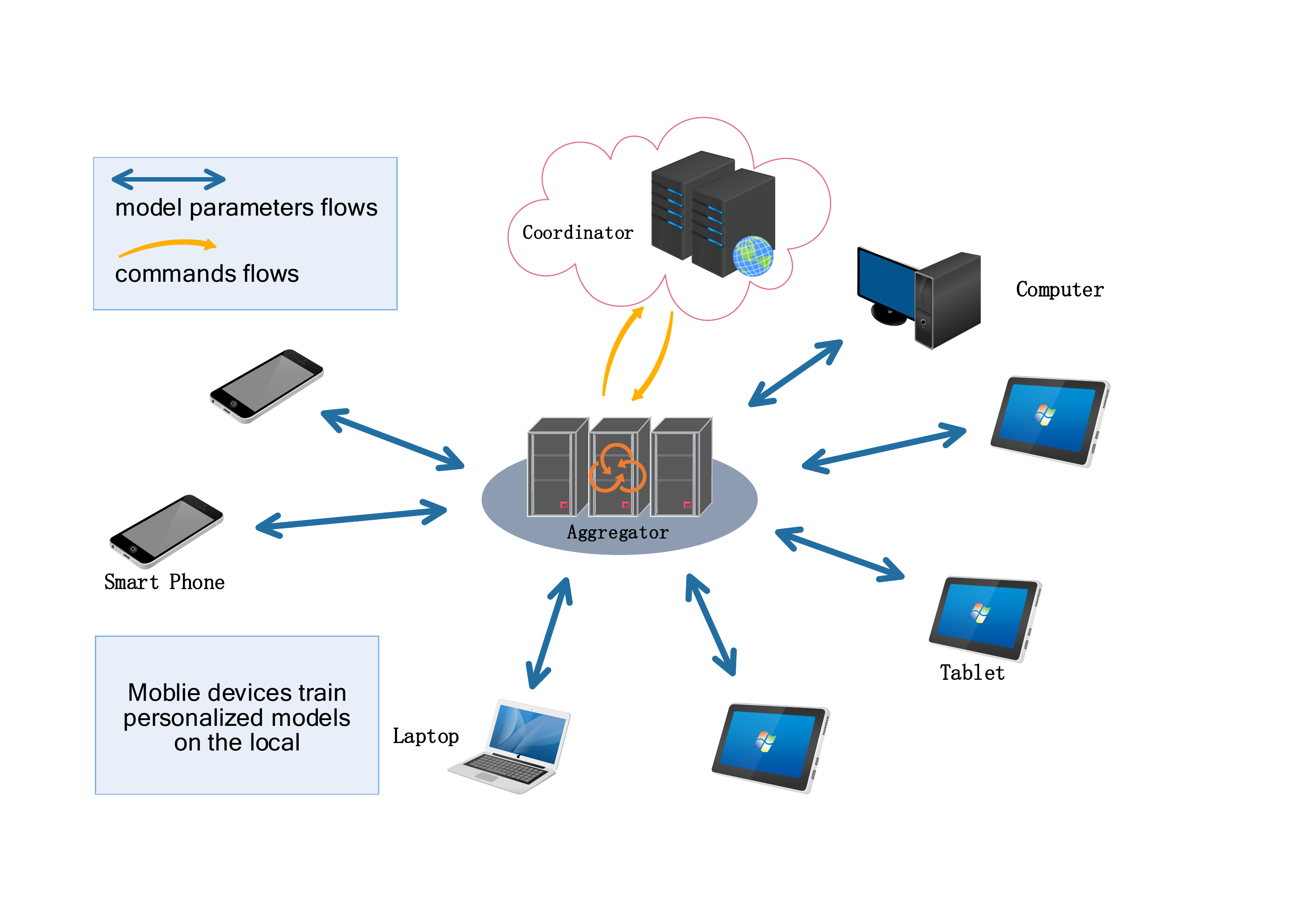}
	\caption{Federated Learning with Central Server.}     \label{fig-1}
\end{figure}

With the advantages of privacy preservation and communication efficiency, today's FL seems to be practical. However, in practice, clients are usually distributed in Peer-to-peer networks \cite{P2P01}, such as Smart Home, Internet of Things, and Ad. Hoc. There are large amounts of clients not connected with a central server directly, due to the geographical location, signal strength, and other possible reasons. Although this part of clients can communicate with each other, they can't upload the model update parameters to a central aggregator directly. For example, the devices in Smart Home are usually not connected to a smartphone, which plays the role of an aggregator, in day time. It means a lot of time that could be used for aggregation will be wasted.

Fortunately, a central server could not be essential in FL. As demonstrated by Lian {\em et al}. \cite{Lian17}, the central server may even become a bottleneck when the number of clients is very large. The survey in \cite{OpenProblem19} also proposes a decentralized environment could motivate the design of the next generation of FL systems. Subsequently, a series of FL frameworks and algorithms without a central aggregator was proposed. Roy {\em et al.} \cite{Braintorrent19} present BrainTorrent, a new FL framework in the highly dynamic P2P environment. Ramanan {\em et al.} \cite{BAFFLE20} propose a blockchain based aggregator free FL framework, BAFFLE, which achieves high scalability and computational efficiency in a private Ethereum network. Lu {\em et al.} \cite{Lu19} propose a fully decentralized FL framework by leveraging two classic non-convex decentralized optimizations. Lalitha {\em et al.} \cite{P2PFL19} considers the problem of training a machine learning model in a fully decentralized framework. The proposed algorithm generalizes the prior works on FL and obtains a theoretical guarantee (upper bounds) that the probability of error and true risk are both small for every participant. Dubey {\em et al.} \cite{Dubey20} devise FEDUCB for decentralized (peer-to-peer) FL to solve the contextual linear bandit problem. A fully decentralized FL approach proposed in \cite{Wittkopp21} introduces a concept of student and teacher roles for model training.

However, all the above advanced decentralized works only present frameworks or algorithms. A formal description of decentralized FL settings and a general model training protocol are sill absent. In this paper, the federated learning without a fully connected central server we refer to as {\it federated learning (FL) in peer-to-peer (P2P) networks}. We are interested in how to train a global model in the context of {\it FL in P2P networks} under the premise of privacy-preservation and security. Therefore, we design a \underline{P}rivacy-\underline{P}reserving global model \underline{T}raining (PPT) protocol for {\it FL in P2P networks}. PPT generates a random noise for local model disturbance to guarantee the privacy. And the noise will be eliminated ultimately, which ensure the global model performance. PPT also distributes communication keys for encrypted communication to enhance the security. Besides, PPT adopts Game Theory method to resist collusion attacks. And a broadcasting method is used throughout for the robustness of the system. The main contributions of this paper are as follows: 

\textbf{Federated learning in P2P networks.} We design a privacy-preserving global model training protocol, PPT, in the context of {\it FL in P2P networks}. To the best of our
knowledge, this is the first collaboratively training protocol in the context of {\em FL in P2P networks}. Following our PPT protocol, a group of clients connected in P2P networks can collaboratively train a global machine learning model privately and securely. And the global model performance is surely better than other FL aggregation methods using noise-disturbance-based privacy-preserving techniques, e.g., DP. 

textbf{Security and privacy-preserving protocol.} We analyze the security strength and privacy-preservation ability of PPT. In particular, we analyze the connectivity of the communication key establishment scheme, which is indispensable for the security of data transmission. 

\textbf{Experimental evaluation.} We evaluate the performance of the proposed PPT protocol by training a spam classification model on two real-world datasets, i.e. Trec06p and Trec07. We also research the client dropout situation to prove the robustness against the dropout problem. Experimental results show that the proposed protocol is more efficiency than Google's Secure Aggregation[]. 

The rest of the paper is organized as follows. Section II demonstrates some primitive concepts. The system model, security requirements, and the design goal are formalized in Section III. In Section IV, we design the PPT protocol. In Section V, we analyze the security and privacy-preservation of PPT, followed by the experimental design and the results and performance evaluation in Section VI and VII, respectively. The limitation and future work are shown in Section VIII. Finally, we draw our conclusions in Section IX.

\section{Preliminaries}

In this section, we briefly describe the data transmission form in P2P networks, the model training process, and the key pre-distribution scheme in the proposed PPT protocol. 

\subsection{Peer-to-peer networks and data transmission}

Peer-to-peer (P2P) networks were popularized by file sharing systems such as the music-sharing application Napster. A P2P network is a distributed application architecture that tasks or workloads are partitioned between clients. Clients are equally privileged, equipotent participants. Different from Client-server (CS) networks, P2P networks, in which interconnected clients share resources amongst each other without the usage of a centralized administrative system, don't need central coordination by servers or stable hosts. 

Data transmission in P2P networks usually follows a P2P transmission protocol, i.e., a sender upload the data, then any client connected to the sender can download the data directly. The transmission form is obviously efficient compared with CS networks.

\subsection{Stochastic gradient descent and model training}

Stochastic gradient descent (SGD) \cite{SGD10} is an iterative method for optimizing an objective function. In modern machine learning, SGD is usually regarded as a stochastic approximation of gradient descent optimization for loss function. Especially when the features are high-dimensional, SGD reduces the computational burden and achieves faster iterations to convergence.

In the FL context, $ n $ clients execute FederatedAveraging (or FedAvg) \cite{FL17} algorithm. That is, each client executes a fixed number of iterations of SGD on the current model using its local data, then the central server takes a weighted average of the resulting local models to update the global model. The global model update process is demonstrated as
\begin{equation}
	M^{R+1} \leftarrow \sum_{i=0}^{n-1}\frac{\omega_{i}}{\omega} {m_{i}^{R}} ,
\end{equation}  
where $ M^{R+1} $ is the updated global model in the $ R $th round, and $ m_{i}^{R} $ is the local model. The weights of the local model when aggregating are composed of the local training set size $ \omega_{i} $ and the sum of local training set sizes $ \omega=\sum_{i=0}^{n-1}\omega_{i} $.

\subsection{Eschenauer-Gligor scheme }

Eschenauer-Gligor(E-G) scheme \cite{EG02} is first proposed as a random key pre-distribution scheme for distributed sensor networks. The basic idea is that clients randomly pick a certain number of keys from a large key pool and use the same keys as the communication keys with other clients. As a key-management scheme, the E-G scheme requires memory storage for only few tens to a couple of hundred keys and provides similar security and superior operational properties comparing to pair-wise private key-sharing schemes.

A typical E-G scheme consists of the following three phases:

\begin{adjustwidth}{0.5cm}{0cm}
	i) \textit{Phase 1, key pre-distribution}: A trusted central server generates a large key pool containing $ \eta $ keys and key identifiers (IDs) offline. Each client randomly extracts $ l $ keys from the key pool with replacement to establish its own key ring and stores the key ring locally. 
\end{adjustwidth}

\begin{adjustwidth}{0.5cm}{0cm}    
	ii) \textit{Phase 2, shared-key discovery}: Each client broadcasts the lists of identifiers in the key ring to discover the same keys, called shared-keys, with its neighbor clients. 
\end{adjustwidth}

\begin{adjustwidth}{0.5cm}{0cm}     
	iii) \textit{Phase 3, path-key establishment}: A third-party client assigns a path-key to the pairs of clients that are connected but sharing no keys after the shared-key discovery phase.
	
\end{adjustwidth}

\section{System model, security requirement and design goal}

In this section, we formalize the system model, security requirements, and identify our design goal.

\subsection{System model}

In the context of {\it FL in P2P networks}, all clients $ c_{i} $ distributed in P2P networks are potential clients. We mark the clients willing to participate global model training as target clients. And only a small part (even if one client) of them are connected to an aggregator directly. In the system model, we mainly focus on how to collaboratively train a global machine learning model among target clients. Every target client $ u_{i} $ trains a local model $ m_{i} $ using a central original model $ M $ and its own data, where $ i=0,1,...,n-1 $. The local training processes can be seen as executing stochastic approximation of gradient descent, i.e., SGD, for the central original model. A local update to the original model by a target client can be considered a product of a gradient and a step size. In this paper, we simply define the local model update parameters as $ x_{i} $ and denote as:
\begin{equation}
x_{i}=m_{i}-M
\end{equation}

Considering the impact of different local models on the global model, different weights $ \omega_{i} $ are set for corresponding local model updating parameters. Thus the global model update process is demonstrated as:

\begin{eqnarray}\label{eq-3}
M^{new} = \frac{\sum_{i=0}^{n-1} \omega_{i}x_{i}}{\sum_{i=0}^{n-1} \omega_{i}} + M,
\end{eqnarray}
where $ M^{new} $ is the updated global model. We illustrate the system model in the left part of Fig. 2.

\begin{figure*}[htbp]
	\centering
	\vspace*{-30pt}
	\hspace*{0pt}
	\includegraphics[width=18cm,height=10cm,trim=50 100 50 50,clip]{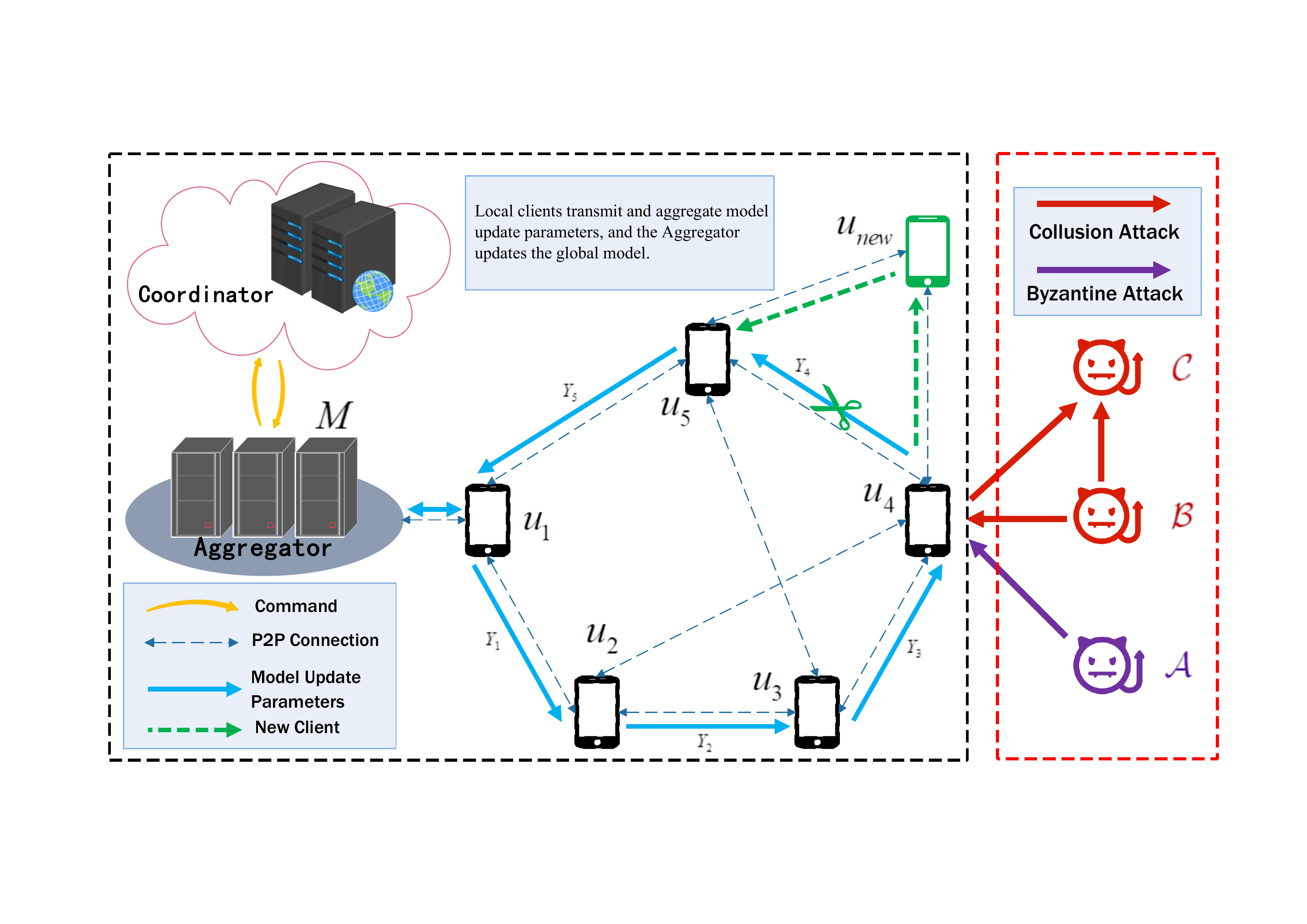}
	\caption{System model under consideration.}     \label{fig-2}
\end{figure*}

\textit{Communication model.} The process of aggregating data can be easily finished in CS networks, as all clients just need to upload $x_{i}$ to a central aggregator. When coming to P2P networks, it seems hard to transmit data to an aggregator, especially for clients not connected to the aggregator directly. However, a client can transmit the local model update parameters to a one-hop neighbor (neighbor client) directly within its communication range. The recipient adds its own data and transmits the aggregated data to its one-hop neighbor sequentially. We define this kind of transmission form as {\em one-hop transmission}.

\subsection{Security requirements}
Security and privacy preservation are crucial for the success of {\it FL in P2P networks}. In our security model, we consider the aggregator and clients are honest but curious. That is, local model update parameters are considered private. Meanwhile, there exists an adversary $\mathcal{A}$ aiming to broke the global model training process. In addition, the adversary $\mathcal{A}$ could also launch active attacks to threaten the data integrity. More seriously, two adversaries $\mathcal{B}$ and $\mathcal{C}$ could execute collusion attacks \cite{Collusion15} to eavesdrop on the model update parameters. For instance, the upstream and downstream clients of a client $ u_{i} $ are adversaries $\mathcal{B}$ and $\mathcal{C}$. $ \mathcal{B} $ transmits its data $ x_{\mathcal{B}} $ to $ u_{\mathcal{C}} $ directly after $ u_{\mathcal{C}} $ receives the data $ x_{\mathcal{B}}+x_{i} $ from $ u_{i} $. Whereupon, $ u_{\mathcal{C}} $ can calculate the data $ x_{i} $. We illustrate the adversary model in the right part of Fig. 2. To prevent adversaries from learning the private model update parameters and to defend against malicious actions, the following security requirements should be satisfied.

\begin{adjustwidth}{0.5cm}{0cm}
	\textit{Confidentiality.} Protect individual local model update parameters from anyone but the client itself. Even the aggregator can only read the aggregated results rather than individual local model update parameters. Moreover, the parameter privacy will not be compromised even if there is a collusion attack.
	
\end{adjustwidth}

\begin{adjustwidth}{0.5cm}{0cm}
	\textit{Authentication and Data integrity.} Authenticate the model update parameters that are really sent by a legal client and have not been altered during the transmission, i.e., if the adversary $\mathcal{A}$ forges and/or modifies the model update parameters, the malicious operations should be detected.
	
\end{adjustwidth}

\begin{adjustwidth}{0.5cm}{0cm}
	\textit{Byzantine robustness.} Defend against Byzantine attacks caused by consensus problems, i.e., if the adversary $\mathcal{A}$ claims having the aggregated model update parameters, the malicious operations should be detected.
	
\end{adjustwidth}

\subsection{Design goal}
Under the aforementioned system model and security requirements, our design goal is to design an efficient and privacy-preserving global model training protocol for {\it FL in P2P networks}. Specifically, the following two objectives should be achieved.

\begin{adjustwidth}{0.5cm}{0cm}
	\textit{The security requirements should be guaranteed in the proposed protocol.} As stated above, if the training process does not consider security, the users’ privacy could be disclosed, and the global model could be destroyed. Therefore, the proposed protocol should achieve confidentiality, authentication, data integrity, and Byzantine robustness \cite{Byzantine82} simultaneously.
	
\end{adjustwidth}

\begin{adjustwidth}{0.5cm}{0cm}
	\textit{Communication efficiency should be guaranteed in the proposed protocol.} Although the communication among users in P2P networks is featured with high efficiency, to support hundreds and thousands of target clients aggregating local model update parameters, the arranged aggregating route should also consider the communication efficiency.
	
\end{adjustwidth}

\section{System design}

In this section, we propose an efficient and privacy-preserving global model training protocol for {\it FL in P2P networks}, which mainly consists of the following five parts: communication key establishment, local model training and disturbance, model update parameters transmission and aggregation, global model update, and complementary mechanisms.

\subsection{Communication key establishment} 

We first construct communication keys for potential clients. The operations that broadcasting the IDs of the keys in the key ring and path-key distribution leak the privacy of shared-keys obviously. We modify the E-G scheme and used it as our communication key establishment scheme. 
A large key pool is generated previously, which contains $ \eta $ keys and their identifiers (IDs). Every potential client $ c_{i} $ randomly extracts $ l $ keys from the key pool with replacement to establish the key ring $ \{k_{i\alpha}\vert\alpha=0,1,...,l-1\} $. Afterwards, each client $ c_{i} $ broadcasts encrypted messages $ \{A_{i\alpha}\vert \alpha=0,1,...,l-1\} $. Every $ A_{i\alpha} $ is encrypted by every $ k_{i\alpha} $ in $ c_{i} $'s own key ring, and the encryption scheme is the symmetric cryptosystem. We represent the encryption process as 
\begin{equation*}
	A_{i\alpha}\leftarrow \textbf{\emph{EG.Enc}}\left(a_{i\alpha}, k_{i\alpha}\right),
\end{equation*} 
where $ a_{i\alpha} $ is the plaintext of the message. The client $ c_{j} $ who receives $ A_{i\alpha} $ reveals the challenge by executing the decryption process, shown as 
\begin{equation*}
	a_{i\alpha} \leftarrow \textbf{\emph{EG.Dec}}\left(A_{i\alpha}, k_{j\alpha}\right).
\end{equation*}

Thereby, all neighbor clients can get the knowledge of keys held by both the broadcast client and themselves, respectively. All clients execute the above interactions mutually to obtain the same keys with each other, which are called shared-keys $ \{k_{i,j}^{r}\vert r\in \mathbb{N^{*}}\} $. Any two adjacent clients $ c_{i} $ and $ c_{j} $ with more than $ e $ shared-keys execute XOR operations to compute their communication key, demonstrated as:

\begin{equation*}
K_{i,j}=k_{i,j}^{1} \oplus k_{i,j}^{2} \oplus ... \oplus k_{i,j}^{e}\oplus...  \quad. 
\end{equation*}

As for two clients connected directly but having no shared-keys, a third-party client distributes the keys that haven't been used to the two clients. Then, they choose some keys independently to obtain shared-keys and generate communication keys for themselves. A client will never choose all the received keys from the third-party client as its shared-keys. Otherwise, it will lead to privacy leakage, as the third-party client will calculate their communication key. Therefore, the communication keys are established for all potential clients.

In addition, we propose a key {\em revocation and update mechanism} to enhance the communication key establishment scheme. When a client is recognized as a malicious client, we execute the following operations to revoke the poison keys hold by the malicious client and rebuild communication keys for other participants:
\begin{adjustwidth}{0.5cm}{0cm}	
	\noindent \textcircled{\oldstylenums{1}} Coordinator broadcasts a countermand which contains the IDs of the poison keys.
	
	\noindent \textcircled{\oldstylenums{2}} Clients delete the invalid keys from their key rings.
	
	\noindent \textcircled{\oldstylenums{3}} Affected clients rebuild communication keys.
	
	\noindent \textcircled{\oldstylenums{4}} When the key pool is used for a long time, the lifetime of keys expires. All clients restart the communication key establishment phase. 
\end{adjustwidth}

\subsection{Local model training and disturbance} 

For the settings of {\it FL in P2P networks}, it is reasonable to assume an honest but curious server can coordinate the whole system. When the coordinator starts a training process for a global model, all target clients download the central original model $ M $ following the inherent P2P transmission protocol.

After local training, target clients obtain local personalized models $ m_{i} $ and prepare to uploads the weighted local model update parameters $\omega_{i}x_{i}$ and weights $\omega_{i}$. To improve communication efficiency, clients encode the uploading data by attaching $\omega_{i}$ to the end of $ \omega_{i}x_{i} $, which is defined by $X_{i}$, and denoted by:
\begin{equation}
X_{i}=(\omega_{i}x_{i},\omega_{i}) . 
\end{equation}

Then, the server chooses a target client which is connected to the server directly as the leader client $ u_{0} $. The transmission and aggregation route will start from the leader client. To protect $u_{0}$'s privacy, $u_{0}$ generates a noise $ s $ to disturb $ X_{0} $, shown as $ X_{0}+s $. Note that $ s $ has the same dimension as $ X_{0} $. Subsequently, the leader client $ u_{0} $ can execute the transmission and aggregation process.

\subsection{Model update parameters transmission and aggregation} 

The whole data transmission is following the inherent P2P transmission protocol, i.e., the sender uploads the data, then the recipient downloads the data directly. It is not conducive to planning a transmission route, as all neighbor clients can download the data from the sender.

Therefore, the leader client $ u_{0} $ chooses a neighbor client as the downstream client $ u_{1} $. $ u_{0} $ encrypts $ X_{0}+s $ using their communication key $K_{0,1}$. The encryption scheme is the symmetric cryptosystem. We write the encryption process as: 
\begin{equation*}
Y_{0}\leftarrow \textbf{\emph{Enc}}(X_{0}+s,K_{0,1}) . 
\end{equation*} 
Then, $u_{0}$ uploads $Y_{0}$. Although every $ u_{0} $'s neighbor client can still download $ Y_{0} $, only the chosen client $u_{1}$ can decrypt $Y_{0}$. 

After downloading $ Y_{0} $, $ u_{1} $ executes the decryption process, shown as: 
\begin{equation*}
X_{0}+s\leftarrow \textbf{\emph{Dec}}(Y_{0},K_{0,1}) . 
\end{equation*} 
Afterward, $ u_{1} $ adds its own data $ X_{1}=(\omega_{1}x_{1},\omega_{1})$, denoted by $ X_{0}+s+X_{1} $. Whereafter, $ u_{1} $ chooses a new neighbor client and executes encryption and uploading operations similar to $ u_{0} $.

The rest of target clients will execute the above operations successively until all participants finish data aggregation. Then, the aggregating data will be transmitted back to $u_{0}$. $ u_{0} $ subtracts the noise $ s $ and obtains the final aggregated data 
\begin{equation}
\sum_{i=0}^{n-1}X_{i}=X_{0}+X_{1}+...+X_{n-1} . 
\end{equation}

The choice of the downstream client is the most important factor affecting aggregation efficiency. PPT adopts the {\em depth-first search algorithm in the graph} to enhance efficiency. The upstream client will always choose an unvisited client as the downstream client if possible. The algorithm explores as far as possible before backtracking.

In addition, due to the existence of an active adversary $\mathcal{A}$ threatening the data integrity, a reliable digital signature is essential. In the proposed PPT protocol, we improve the function of digital signatures by signing a time stamp together with data to ensure timeliness. Noted that the digital signature scheme is the asymmetric cryptosystem, and the secret key $ SK_{i} $ and public key $ PK_{i} $ are initially deployed on every client $ u_{i} $. The details of our signature scheme are as follows:
\begin{adjustwidth}{0.5cm}{0cm}
\noindent \textcircled{\oldstylenums{1}} The data holder $ u_{i} $ attaches a timestamp $ t_{i} $ to the encrypted data $ Y_{i} $ and signs on it using its secret key $ SK_{i} $, which is shown as 
\begin{equation*}
\sigma_{i} \leftarrow  \textbf{\emph{Sign}}(Y_{i},t_{i},SK_{i})  .
\end{equation*}

\noindent \textcircled{\oldstylenums{2}} $ u_{i} $ sends the encrypted data with signature  $ <\!Y_{i},\sigma_{i}\!> $ to the downstream client.

\noindent \textcircled{\oldstylenums{3}} The recipient verifies the signature by $ u_{i} $'s public key $ PK_{i} $ and check the time stamp, which is demonstrated as
\begin{equation*}
\{1,0\}\leftarrow \textbf{\emph{Verf}}(\sigma_{i},PK_{i}) .
\end{equation*}
\end{adjustwidth}

\subsection{Global model update} 

After local model update parameters aggregation, the leader client $u_{0}$ uploads the aggregated data $ \sum_{i=0}^{n-1}X_{i} $. The uploading data is encrypted and the signature is surely attached. The server confirms the aggregated data by verifying the signature and decryption. Then, the server decodes the aggregated data to aggregated weighted model update parameters $ \sum_{i=0}^{n-1}\omega_{i}x_{i} $ and aggregated weights $ \sum_{i=0}^{n-1}\omega_{i} $. Whereupon, the server updates the global model. 
The form of global model update is shown in Equation (\ref{eq-3}). Afterward, the updated global model will be distributed to all target clients to train and aggregate over and over again until the global model is converged.
We demonstrate the global model update process in the $ R $ round as follows:

\begin{equation}
M^{R+1}=\frac{\sum_{i=0}^{n-1} \omega_{i}x_{i}^{R}}{\sum_{i=0}^{n-1} \omega_{i}}+M^{R}
\end{equation}

\subsection{Complementary mechanisms}

The above design does well in efficiency and security when all participants are honest but curious. But in practice, there still exists active adversaries threatening the aggregation and transmission process. Besides, unlimited transmission and new client participation are not allowed. Hence, several complementary mechanisms are annexed to our basic design. The details of the complementary mechanisms are described as follows:

\emph{Neighborhood Broadcast Mechanism}: To enhance the robustness and security, we propose a Neighborhood Broadcast mechanism. The mechanism dictates that every target client broadcasts the behavior and a timestamp $ \tau $ to all neighbor clients when executing an operation. The neighbor clients sequentially pass the broadcast to their neighbors until all clients receive the message. Based on the neighborhood broadcast mechanism, all operations are under the supervision of all clients, including the dropout and the new participation.

\emph{Termination Mechanism}: In practice, the time of one global model training round is fixed and limited. Unlimited transmission, aggregation, and new client participation are not allowed. Thus, the following termination operations will be executed:

\begin{adjustwidth}{0.5cm}{0cm}	
	\noindent \textcircled{\oldstylenums{1}} The request of a newly joined client will be refused when one-half of the target clients have completed the aggregation process.
	
	\noindent \textcircled{\oldstylenums{2}} PPT dictates executing the backtracking operation immediately when the time has passed two-thirds of the stipulated aggregation time.
	
\end{adjustwidth}

\emph{Supervision and Report Mechanism}: The collusion attack is common in distributed systems, which is no exception in the context of {\em FL in P2P networks}. To prevent collusion attacks, we propose the Supervision and Report Mechanism based on Game Theory \cite{Nash50}. The core idea is encouraging mutual reporting, and the process is as follows:
	
\begin{adjustwidth}{0.5cm}{0cm}	
	\noindent \textcircled{\oldstylenums{1}} All participants are required to pay a deposit $ d $ before starting the aggregation process. And the deposit will be returned at the end of the aggregation.
	
	\noindent \textcircled{\oldstylenums{2}} We encourage mutual reporting aiming at malicious operations. If the alleged malicious operation is proven to be true, the defendant's deposit will be paid to the complainant.
	
	\noindent \textcircled{\oldstylenums{3}} If two participants report each other to get a pay, their deposits of them will be confiscated.

\end{adjustwidth}

	\noindent As long as the profit by collusion attack is lower than the deposit, i.e., $ g\textless d $, the two parties will tend to execute the PPT protocol honestly. We analyze security based on Nash Equilibrium in Section V.

Integrating the main processes mentioned above and the complementary mechanisms, we propose our \underline{P}rivacy-\underline{P}reserving Global Model \underline{T}raining (PPT) protocol in Fig. 3. Note that the red underlined parts are required to guarantee security under the active-adversary assumption (and not necessarily under the honest-but-curious assumption).

\begin{figure*}\label{fig-3}
	\centering 
	\begin{tikzpicture}[scale=0.7]
	a client which is chosen to be the downstream client drops	
	\path[fill=yellow!0, draw=black!50](0,0) rectangle (25,33);
	
	\node[right] at (5.0,32.5){{Privacy-Preserving Global Model Training Protocol for FL in P2P networks}};
	
	\draw[thick](1,32.1)-- (24,32.1);
	
	\node[right] at (1,31.7){\textbf{\small{ $\bullet$\quad Setup:}}};
	
	\node[right] at (2,31.2){\footnotesize{- A large key pool containing $ \eta $ keys is generated offline.}};
	
	\node[right] at (2,30.7){\footnotesize{- Every potential participant $ c_{i} $ randomly extracts $ l $ keys from the key pool with replacement.}};

	\node[right] at (2,30.2){\footnotesize{- $ c_{i} $ stores the extracting keys and the IDs of the keys to establish the key ring $\{k_{i\alpha}\vert \alpha\!=\!0,\!1,\!...,\!l-1\} $.}};

	\node[right] at (2,29.7){\footnotesize{- $ \theta $ kinds of noise generation algorithms are built in every client in advance as well as the public \& private key $ P\!K\!_{i} $ and $S\!K\!_{i}$.}};
	
	\node[right] at (1,29.2){\textbf{\small{ $\bullet$\quad Phase 0 (Communication key establishment):}}};
	
	\node[right] at (2,28.7){\footnotesize{- $ c_{i} $  broadcasts $ l $ encrypted messages $ \{A_{i\alpha}\} $ using every $ k_{i\alpha} $ in $\{k_{i\alpha}\}$ respectively, shown as $ A_{i\alpha}\leftarrow \textbf{\textit{EG.Enc}}(a_{i},k_{i\alpha}) $.}};

	\node[right] at (2,28.2){\footnotesize{- The client $ c_{j} $ who receives $ A_{i\alpha} $s tries to decrypt the messages, shown as $ a_{i}\!\!\leftarrow\! \textbf{\textit{EG.Dec}}(A_{i\alpha},k_{j\alpha}) $. Thereby, all recipient clients  }};
	
	\node[right] at (2.3,27.7){\footnotesize{get the keys held by both the broadcast client and themselves.}};

	\node[right] at (2,27.2){\footnotesize{- All potential clients execute the above interactions mutually to obtain the shared-keys $ \{k_{ij}^{r}\vert r\in \mathbb{N^{*}}, r\textless e\} $.}};
	
	\node[right] at (2,26.7){\footnotesize{- Adjacent two clients $ c_{i} $ and $ c_{j} $ with more than $ e $ shared-keys execute XOR operations to compute their communication key, shown }};
	
	\node[right] at (2.3,26.2){\footnotesize{as $ K_{i,j}=k_{i,j}^{1} \oplus k_{i,j}^{2} \oplus ... \oplus k_{i,j}^{e}\oplus...  $ }};

	\node[right] at (2,25.7){\footnotesize{- Consider the situation that two clients are connected directly but don't have shared-keys, a third party client distributes the keys }};
	
	\node[right] at (2.3,25.2){\footnotesize{that haven't been used for the shared-key establishment to the two clients. Then, they choose some keys independently to obtain}};
	
	\node[right] at (2.3,24.7){\footnotesize{shared-keys and generate the communication key.}};

	\node[right] at (2,24.2){\footnotesize{- {\color{red}\underline{If a client is recognized as a malicious client, the affected clients execute the Key Revocation and Update Mechanism to rebuild}} }};
	
	\node[right] at (2.3,23.7){\footnotesize{{\color{red}\underline{communication keys.}}}};
	
	\node[right] at (2,23.2){\footnotesize{- When the key pool is used for a long time, the lifetime of keys expires. All potential participants execute the Key Revocation and  }};

	\node[right] at (2.3,22.7){\footnotesize{Update Mechanism to rebuild communication keys }};

	\node[right] at (2,22.2){\footnotesize{- If more than half of the potential clients are identified as malicious clients, abort.}};
	
	\node[right] at (1,21.7){\textbf{\small{ $\bullet$\quad Phase 1 (Local model training and disturbance):}}};
	
	\node[right] at (2,21.2){\footnotesize{- Participants broadcast that they complete local training and will participate in the global model update. }};
	
	\node[right] at (2,20.7){\footnotesize{- The coordinator chooses a client as the leader client $u_{0}$ randomly. (To increase the success rate of the protocol, the coordinator}};
	
	\node[right] at (2.3,20.2){\footnotesize{prefers a client that has completed the protocol before as the leader client.)}};
	
	\node[right] at (2,19.7){\footnotesize{- The leader client $ u_{0} $ generates a disturbed noise $ s $ by the built-in noise generation algorithms and disturbs the encoding local model }};
	
	\node[right] at (2.3,19.2){\footnotesize{updating parameters $ X_{0} $, which is shown as $X_{1}+s$.}};

	\node[right] at (2,18.7){\footnotesize{- If the leader client drops out, abort. The coordinator selects a new leader client and restarts Phase 1. }};

	\node[right] at (1,18.2){\textbf{\small{ $\bullet$\quad Phase 2 (Model update parameters transmission and aggregation):}}};
	
	\node[right] at (2,17.7){\footnotesize{- $ u_{0} $ chooses a neighbor $ u_{1} $ and broadcasts a message to all neighbors that it will transmit the aggregated data to $ u_{1} $.  }};
	
	\node[right] at (2.3,17.2){\footnotesize{A timestamp $ \tau_{0} $ is attached to the message.}};

	\node[right] at (2,16.7){\footnotesize{- $ u_{0} $ uses its corresponding communication key $  K_{0,1}$ to encrypt the disturbed data, shown as $Y_{0}\leftarrow \textbf{\textit{Enc}}(X_{0}+s,K_{0,1}) $.}};
	
	\node[right] at (2,16.2){\footnotesize{- {\color{red}\underline{$ u_{0} $ attaches a timestamp $ t_{0} $ to the ciphertext $ Y_{0} $ and signs on it, shown as $\sigma_{0}\leftarrow  \textbf{\textit{Sign}}(Y_{0},t_{0},SK_{0})  $}}.}};
	
	\node[right] at (2,15.7){\footnotesize{- $ u_{0} $ sends the encrypted data {\color{red}\underline{with the signature $ <\!\!Y_{0},\sigma_{0}\!\!> $}} to $ u_{1} $.}};
	
	\node[right] at (2,15.2){\footnotesize{- {\color{red}\underline{When $u_{1}$ receives the data, it verifies the signature and checks the timestamps, shown as $ \{1,0\}\!\leftarrow\! \textbf{\textit{Verf}}(\tau_{0},\sigma_{0},\!P\!K_{0}\!) $}}.}};

	\node[right] at (2,14.7){\footnotesize{- {\color{red}\underline{If $ u_{1} $ confirms that the data truly comes from $u_{0}$}}, $u_{1}$ decrypts the encrypted data, shown as $X_{0}+s\leftarrow \textbf{\textit{Dec}}(Y_{0},K_{0,1}) $.}};
	
	\node[right] at (2.3,14.2){\footnotesize{{\color{red}\underline{If not, wait for the correct aggregated data from $ u_{0} $}}.}};
	
	\node[right] at (2,13.7){\footnotesize{- $u_{1}$ aggregates the plaintext $X_{0}+s$ and its own data $ X_{1} $, shown as $ X_{0}\!\!+\!X_{1}\!+\!s $.}};
	
	\node[right] at (2,13.2){\footnotesize{- $ u_{1} $ executes the above broadcast, encryption, {\color{red}\underline{timestamp attachment, signing,}} and transmission operations as $ u_{0} $ did.}};
	
	\node[right] at (2,12.7){\footnotesize{- If $ u_{0} $ doesn't receive the broadcast from $ u_{1} $, $ u_{0} $ chooses a new neighbor as $u_{1}$ and executes the data aggregation process.  }};
	
	\node[right] at (2,12.2){\footnotesize{- The same operations are executed among all participants successively until all participants complete aggregation. }};

	\node[right] at (2,11.7){\footnotesize{- The selection of the downstream client follows the {\em depth-first search algorithm}.}};
	
	\node[right] at (2,11.2){\footnotesize{- If all neighbor clients have received the aggregating data already, the current client passes the data back to the previous client.}};
	
	\node[right] at (2,10.7){\footnotesize{- The protocol dictates that every honest user can only aggregate the local model update parameters once. }};
	
	\node[right] at (2.3,10.2){\footnotesize{As for the clients receiving encrypted data again, they only execute {\color{red}\underline{signature verification}}, decryption, encryption, {\color{red}\underline{signing}}, broadcast,}};
	
	\node[right] at (2.3,9.7){\footnotesize{and transmission operations.}};
	
	\node[right] at (2,9.2){\footnotesize{- Finally, $ u_{0}$ receives the aggregated result $\sum_{i=0}^{n-1}X_{i}+s$}};

	\node[right] at (2,8.7){\footnotesize{- $u_{0}$ subtracts the disturbed noise $s$ and obtains the encoding global model update parameters $X_{0}+X_{2}+...+X_{n-1}$.}};
	
	\node[right] at (2,8.2){\footnotesize{- If $ u_{0} $ doesn't receive aggregated result after the stipulated time, abort.}};
		
	\node[right] at (1,7.7){\textbf{\small{ $\bullet$\quad Phase 3 (Global model update):}}};

	\node[right] at (2,7.2){\footnotesize{- $u_{0}$ uploads the encoding global model update parameters $X_{0}+X_{2}+...+X_{n-1}$ to the aggregator.}};
	
	\node[right] at (2,6.7){\footnotesize{- The aggregator decodes the encoding global model updating parameters to $\sum_{i=0}^{n-1}\omega_{i}x_{i}$ and $\sum_{i=0}^{n-1}\omega_{i}$.}};
	
	\node[right] at (2,6.2){\footnotesize{- The aggregator updates the global model and broadcasts a message that the global model has been updated. }};
	
	\node[right] at (2,5.7){\footnotesize{- If the aggregator doesn't receive the model updating parameters after the stipulated time, abort.}};
	
	\node[right] at (1.2,5.2){\small{\textsl{All clients follow the {\color{red}\underline{Supervision and Report Mechanism}} in the whole data aggregation process.}}};
	
	\node[right] at (1.2,4.5){\textbf{\small{PS: }}};
	
	\node[right] at (1.0,4.0){\textbf{\small{ $\bullet$\quad For a newly joined client:}}};

	\node[right] at (2,3.5){\footnotesize{- Start the Communication key establishment phase for the newly joined client $ u_{n\!e\!w}$ }};
	
	\node[right] at (2,3.0){\footnotesize{- $ u_{n\!e\!w}$ requests for participating in the global model training process.}};
	
	\node[right] at (2,2.5){\footnotesize{- $ u_{n\!e\!w}$ joins in the data transmission process according to the {\em depth-first search algorithm}.}};
	
	\node[right] at (1.0,2.0){\textbf{\small{ $\bullet$\quad Termination:}}};
	
	\node[right] at (2,1.5){\footnotesize{- The protocol rejects the request of a newly joined client when one-half of the clients have completed the aggregation process.}};
	
	\node[right] at (2,1.0){\footnotesize{- The protocol executes backtracking operation immediately when the time has passed two-thirds of the stipulated aggregation time.}};

	\end{tikzpicture}
	\caption{Detailed description of the Privacy-preserving Training Protocol. {\color{red}\underline{Red, underlined parts are required to guarantee security under the active-adversary}} \protect\\{\color{red}\underline{assumption (and not necessarily under the honest-but-curious assumption).}} }
\end{figure*} 

\section{Security Analysis}
In this section, we analyze the security properties of the proposed PPT protocol. In particular, following the security requirements discussed earlier, our analysis will focus on how the proposed PPT protocol can achieve local model update parameters privacy preservation, source authentication and data integrity, and Byzantine robustness. Besides, we analyze the connectivity based on the communication key establishment scheme, which is indispensable for the security of data transmission.

\subsection{Privacy of clients}

\textit{The local model update parameters are privacy-preserving.} According to the proposed PPT protocol, the local model update parameters are uploaded in the form of ciphertext. Only the client chosen by the sender can decrypt the ciphertext using the communication key. For any other honest-but-curious client and server in P2P networks, as long as the encryption algorithm is secure, it is impossible to obtain any information of the model update parameters. As for a chosen downstream client $ u_{\nu} $, he can only obtain aggregated parameters with noise $ \sum_{i=0}^{\nu-1}{X_{i}+s} $.

\textit{The authentication and data integrity of clients' model update parameters are achieved in the proposed PPT protocol.} In the proposed PPT protocol, each client signs on the ciphertext, when uploading the aggregated parameters. Therefore, if the adopted digital signature scheme is provably secure, the source authentication and data integrity can be guaranteed. As s result, the adversary $\mathcal{A}$’s malicious behaviors can be detected.

\textit{The Byzantine robustness is achieved in the proposed PPT protocol.} In the proposed PPT
protocol, All operations are under the supervision of all clients based on the neighborhood broadcast mechanism. Only the operation achieving consensus by all clients can be accepted. Even if an adversary $\mathcal{A}$ forges aggregated parameters to confuse a target client, the target client can confirm the correct aggregated parameters by comparing received broadcasts from other clients.

\subsection{Defense against collusion attack}

\textit{The proposed PPT protocol is a strong defense against collusion attacks.} All clients abide by the Supervision and Report mechanism. Regarding the collusion attack by adversaries $\mathcal{B}$ and $\mathcal{C}$ as a game based on Game Theory. There is a basic assumption in Game Theory that the participants are selfish but rational. We generalize the assumption to the context of {\em FL in P2P networks}.  
\begin{assumption}\label{asp-1}
	The adversaries $\mathcal{B}$ and $\mathcal{C}$ are selfish but rational in collusion attacks.
\end{assumption}
\begin{theorem}\label{them-1}
	A client's model update parameters are private against collusion attacks under {\it Assumption \ref{asp-1}}.
\end{theorem}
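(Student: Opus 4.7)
The plan is to model the collusion attempt between $\mathcal{B}$ and $\mathcal{C}$ as a two-player strategic-form game and show that, under the Supervision and Report Mechanism, honest execution is the unique rational outcome whenever the deposit $d$ exceeds the attack profit $g$. The statement then follows because, by {\it Assumption \ref{asp-1}}, $\mathcal{B}$ and $\mathcal{C}$ will choose the rational action, and if they never collude then the privacy guarantee already established for the honest-but-curious setting (Section V-A) carries over unchanged.

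First I would set the stage. Once $\mathcal{B}$ and $\mathcal{C}$ attempt the collusion scenario described in Section III-B (the upstream/downstream pair around an honest $u_{i}$), the attack is observable to each of them and the Supervision and Report Mechanism allows either one to file a report against the other. I would therefore restrict attention to a $2\times 2$ game whose pure strategies for each adversary are $C$ (remain silent and share the leaked information) and $R$ (report the partner). Next I would compute the payoff entries directly from the three bullet points of the mechanism in Section IV-E: (i) if both play $C$, each obtains a share of the colluded profit bounded above by $g$; (ii) if exactly one plays $R$, the reporter keeps its own deposit and receives the partner's deposit $d$, while the reported party loses $d$; (iii) if both play $R$, both deposits are confiscated and each party loses $d$.

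The core step is then to verify that $R$ is a (weakly) dominant strategy whenever $g<d$. Against an opponent choosing $C$, switching from $C$ to $R$ changes the payoff from at most $g$ to $d$, a strict gain since $g<d$. Against an opponent choosing $R$, the payoff is $-d$ under either action, so $R$ is at least weakly preferred. Hence $(R,R)$ is the unique equilibrium in (weakly) dominant strategies, with equilibrium payoff $-d<0$ to each adversary. Under {\it Assumption \ref{asp-1}}, a selfish and rational $\mathcal{B}$ (resp.\ $\mathcal{C}$) computes this equilibrium before committing to the attack; anticipating a strictly negative payoff from engaging in collusion versus a payoff of $0$ (deposit returned) from honest execution, the adversary declines to collude. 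Consequently, the collusion scenario never materializes, and the model update parameters of the victim $u_{i}$ enjoy exactly the confidentiality analysis of Section V-A, which completes the argument.

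The main obstacle I anticipate is not the game-theoretic algebra, which is essentially a Prisoner's Dilemma variant, but rather justifying that the mechanism actually places the adversaries in this game. Specifically, I would need to argue that (a) each adversary can in fact produce verifiable evidence of the other's deviation (so that a filed report is ``proven to be true'' as required by the mechanism), and (b) the one-sided report branch is reachable, i.e.\ neither party can secretly abort the collusion without detection. For (a) I would lean on the Neighborhood Broadcast Mechanism and the digital signatures of Phase 2, which bind each transmitted ciphertext to its sender and timestamp, so that any off-protocol disclosure of $x_{\mathcal{B}}$ directly to $\mathcal{C}$ leaves a signed trace admissible as evidence; for (b) I would note that once the signed off-protocol message exists, either party may unilaterally present it. Handling these two points carefully is what turns the informal ``$g<d$ suffices'' intuition of Section IV-E into a clean proof of {\it Theorem \ref{them-1}}.
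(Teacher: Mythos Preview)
Your proposal is correct and uses the same game-theoretic machinery as the paper, but you structure the game differently. The paper models the single decision ``collude versus counter-collude'' in one shot, with payoff matrix $(g,g)$, $(-d,d)$, $(d,-d)$, $(0,0)$; here counter-collusion \emph{strictly} dominates collusion whenever $g<d$ (since $d>g$ against a colluding opponent and $0>-d$ against a counter-colluding one), so the unique Nash equilibrium is the honest profile $(0,0)$ and the argument ends there. You instead condition on a collusion attempt having already occurred and let each adversary choose between staying silent ($C$) and reporting ($R$); because mutual reporting triggers deposit confiscation, your $(R,R)$ cell is $(-d,-d)$ rather than $(0,0)$, which gives only \emph{weak} dominance of $R$ and forces you to add a backward-induction step (anticipated payoff $-d$ from engaging versus $0$ from abstaining) to reach the same $g<d$ conclusion. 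The paper's one-shot formulation is shorter and yields strict dominance directly; your two-stage decomposition is closer to the temporal structure of the mechanism, and your discussion of evidentiary admissibility via signatures and the Neighborhood Broadcast Mechanism supplies justifications that the paper's proof leaves implicit.
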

\begin{proof}
	According to the Supervision and Report mechanism, $\mathcal{B}$ and $\mathcal{C}$ have the same two strategies, i.e., collusion and counter-collusion. We assume the deposits of the two parties $\mathcal{B}$ and $\mathcal{C}$ are $ d $, and the profits of executing collusion attacks successfully are $ g $. Thereby, we can get the pay-off matrix in TABLE I.
	\begin{table}[htbp]
		\centering
		\caption{Pay-off Matrix} 
		
		\begin{tabular}{|m{2cm}<{\centering}|m{1.6cm}<{\centering}m{0pt}|m{2cm}<{\centering}|}
			\hline
			\diagbox[width=8.5em]{$\mathcal{B}$}{$\mathcal{C}$} & \quad\quad collusion&\rule{0pt}{13pt} & counter-collusion  \\  
			\hline
			collusion & \quad\quad$ (g,g) $&\rule{0pt}{13pt} & $ (-d,d) $ \\
			\hline
			counter-collusion & \quad\quad$ (d,-d) $&\rule{0pt}{13pt} & {\color{red}\underline{$ (0,0) $}} \\
			\hline
		\end{tabular}
		
	\end{table}
	
	We definitely hope there is no collusion attacks, i.e., the Nash Equilibrium \cite{Nash50} of the game should be {\color{red}\underline{(counter-collusion, counter-collusion)}}. Therefore, as long as $ g\textless d $, the two adversaries tend to follow the proposed PPT protocol honestly.
	
\end{proof}

\subsection{Analysis of full connectivity }

The limits of the inherent P2P transmission protocol preclude the transmission of private data. Ensuring the secure connectivity rate based on communication keys being maximum possible is of vital importance. We first recall the monotonicity of Random Graph \cite{Graph60}.

Let $G(n, p)$ is a random graph, where $ n $ is the number of clients and $ p $ is the probability that a link exists between two clients.

\begin{lemma} \label{lem-1}
	Given a desired connectivity probability $ P_{c} $ for a graph $G(n, p)$, the threshold function $ p $ is defined by:
	\begin{eqnarray}
	P_{c}=\lim _{n \rightarrow \infty} \operatorname{Pr}[G(n, p) \text { is connected }]=e^{-e^{-c}},
	\end{eqnarray}
	Where $ p=\frac{\ln n }{n} +\frac{c}{n}  $ and $ c $ is any real constant.
\end{lemma}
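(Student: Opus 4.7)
The plan is to reduce the connectivity question for $G(n,p)$ to a question about isolated vertices, and then to apply the method of moments to the number of isolated vertices. Concretely, let $X_n$ denote the number of isolated vertices in $G(n,p)$ with $p = (\ln n + c)/n$. I will argue (i) $\Pr[X_n = 0] \to e^{-e^{-c}}$, and (ii) $\Pr[G(n,p)\text{ is connected}] - \Pr[X_n = 0] \to 0$, so the two probabilities share the same limit.

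For (i), I would first compute the first moment. Writing $X_n = \sum_v I_v$ where $I_v$ indicates that vertex $v$ is isolated, each $\mathbb{E}[I_v] = (1-p)^{n-1}$, so $\mathbb{E}[X_n] = n(1-p)^{n-1}$. Substituting $p = (\ln n + c)/n$ and using $\ln(1-p) = -p + O(p^2)$ shows $\mathbb{E}[X_n] \to e^{-c}$. I would then compute the $k$th factorial moment: for a fixed $k$-tuple of distinct vertices to be simultaneously isolated, none of the $k(n-k) + \binom{k}{2}$ edges incident to the tuple may be present, giving
\begin{equation*}
\mathbb{E}[(X_n)_k] = (n)_k (1-p)^{k(n-k)+\binom{k}{2}} \longrightarrow e^{-kc}.
\end{equation*}
Because these are exactly the factorial moments of a Poisson random variable with mean $e^{-c}$, the method of moments (sufficient here because the limiting distribution is determined by its moments) yields $X_n \Rightarrow \mathrm{Poisson}(e^{-c})$, and in particular $\Pr[X_n = 0] \to \exp(-e^{-c})$.

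Step (ii) is the real obstacle and where I expect to spend most of the effort. The statement is that, asymptotically, the only obstruction to connectivity in this regime is an isolated vertex. To prove it, I would bound the probability that $G(n,p)$ contains a connected component of size exactly $k$ with $2 \le k \le n/2$ by
\begin{equation*}
\binom{n}{k} k^{k-2} p^{k-1} (1-p)^{k(n-k)},
\end{equation*}
using Cayley's formula to count spanning trees of the component together with the requirement that no edge leaves it. Summing this bound over $k$ and showing the total tends to $0$ when $p = (\ln n + c)/n$ requires splitting the range: small $k$ (say $k = O(1)$) handled by direct estimation, and medium/large $k$ (up to $n/2$) handled by Stirling's approximation together with the inequality $(1-p)^{k(n-k)} \le e^{-pk(n-k)}$. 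The delicate part is the middle range, where one must verify that the exponential decay from $e^{-pk(n-k)}$ dominates the combinatorial growth $\binom{n}{k} k^{k-2}$.

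Combining (i) and (ii) yields
\begin{equation*}
\lim_{n \to \infty} \Pr[G(n,p)\text{ is connected}] = \lim_{n \to \infty} \Pr[X_n = 0] = e^{-e^{-c}},
\end{equation*}
which is the claim of the lemma. Monotonicity of connectivity in $p$, invoked just before the lemma, then justifies calling $p = \ln n / n + c/n$ a sharp threshold function: smaller additive corrections in $c$ shift the limiting connectivity probability continuously across $(0,1)$.
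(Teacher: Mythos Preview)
Your proof plan is the standard and correct Erd\H{o}s--R\'enyi argument: Poisson convergence of the isolated-vertex count via factorial moments, combined with a union bound over small components using Cayley's formula to rule out any other obstruction to connectivity. The outline is sound and, with the range-splitting you describe for step (ii), would go through.

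However, you should be aware that the paper does not prove this lemma at all. It is stated as a cited classical result (attributed to Erd\H{o}s and R\'enyi, 1960) and then immediately used, together with Lemma~2 on the key-pool/key-ring trade-off, to justify Theorem~2. So there is no ``paper's own proof'' to compare against: the authors treat Lemma~1 as background. Your write-up is therefore not a reconstruction of anything in the paper but rather a self-contained proof of a result the paper merely quotes. If your goal was to match the paper, a one-line citation would suffice; if your goal was to supply the missing argument, what you have is the right one.
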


Laurent {\it et al.} \cite{EG02} give the trade-off between the sizes of the key pool and the key ring.

\begin{lemma} \label{lem-2}
	For a given $ p $, the trade-off between the key pool size and the key ring size follows the equality:
	\begin{eqnarray}
	p = 1-\frac{\left((\eta-l) !\right)^{2}}{(\eta-2 l) ! \;\eta !},
	\end{eqnarray}
	Where $ \eta $ is the size of the key pool, and $ l $ is the size of the key ring.
\end{lemma}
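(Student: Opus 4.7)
The plan is to interpret $p$ as the probability that two neighbouring clients have at least one common key in their rings — the only precondition under the modified E-G scheme for establishing a direct communication key — and then to derive the closed form by counting disjoint key rings. Working with the complementary event is easier, so I would first compute the probability that two independently drawn key rings share no key, and then subtract from one.

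First, fix an arbitrary key ring $R_{i}$ of size $l$ chosen from the pool of $\eta$ keys. By symmetry of the uniform distribution over $l$-subsets, the probability that a second ring $R_{j}$ is disjoint from $R_{i}$ depends only on $|R_{i}|=l$, so conditioning on the particular choice of $R_{i}$ is harmless. Under the standard interpretation of the E-G scheme as sampling $l$ distinct keys, the total number of possible rings $R_{j}$ is $\binom{\eta}{l}$, and the number of those disjoint from $R_{i}$ is $\binom{\eta-l}{l}$, since $R_{j}$ must then be drawn entirely from the $\eta-l$ keys outside $R_{i}$. Hence
\begin{equation*}
\Pr[R_{i}\cap R_{j}=\emptyset]=\frac{\binom{\eta-l}{l}}{\binom{\eta}{l}}.
\end{equation*}

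Next, I would expand the two binomial coefficients into factorials and cancel the common factor $l!$ appearing in both numerator and denominator. This simplification rewrites the ratio as $((\eta-l)!)^{2}/((\eta-2l)!\,\eta!)$, and taking the complement produces exactly the stated formula for $p$.

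There is no substantial technical obstacle: the argument is a short counting exercise. The only points requiring care are to make the sampling model explicit (keys are drawn without replacement, so that $R_{i}$ and $R_{j}$ are genuine $l$-subsets, which is what the factorial expression tacitly assumes) and to note that the identity is meaningful only for $2l\le\eta$; when $2l>\eta$ the binomial $\binom{\eta-l}{l}$ vanishes and every pair of clients necessarily shares a key, so $p=1$, consistent with the formula under the usual convention $(\eta-2l)!=\infty^{-1}$ excluded.
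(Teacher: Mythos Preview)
Your derivation is correct and is in fact the standard counting argument behind the Eschenauer--Gligor formula: condition on one ring, count disjoint second rings via $\binom{\eta-l}{l}/\binom{\eta}{l}$, expand and cancel $l!$, then take the complement. The paper itself offers no proof of this lemma at all; it simply attributes the equality to Eschenauer and Gligor \cite{EG02} and uses it as a black box. So there is nothing to compare against, and your proposal supplies exactly the short argument that the cited reference contains.

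One small point worth flagging: the paper's own description of the E-G scheme (Section~II.C and Section~IV.A) says keys are drawn \emph{with replacement}, whereas the factorial expression in the lemma---and your argument---require sampling without replacement so that each ring is an honest $l$-subset. You already noted this, and you are right that the formula only makes sense under the without-replacement model; the ``with replacement'' phrasing in the paper is almost certainly a slip.
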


\begin{theorem} \label{them-2}
	The communication key establishment phase in the proposed PPT protocol satisfies the requirement of private data transmission by choosing proper sizes of the key pool and the key ring.
\end{theorem}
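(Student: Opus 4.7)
The plan is to combine Lemma~\ref{lem-1} and Lemma~\ref{lem-2} in a two‑step ``design'' argument: first invert Lemma~\ref{lem-1} to turn a target connectivity probability $P_c$ into a required edge probability $p^\star$, then invert Lemma~\ref{lem-2} to turn $p^\star$ into feasible values of the key pool size $\eta$ and key ring size $l$. Concretely, I would fix $n$ (the number of potential clients in the P2P network) and a target $P_c$ close to $1$, solve $P_c=e^{-e^{-c}}$ for $c=-\ln(-\ln P_c)$, and take $p^\star=\frac{\ln n}{n}+\frac{c}{n}$ as the minimum edge probability needed so that $G(n,p^\star)$ is connected with probability at least $P_c$ as $n\to\infty$.

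Next I would feed $p^\star$ into Lemma~\ref{lem-2}. Writing $1-p^\star = \dfrac{((\eta-l)!)^2}{(\eta-2l)!\,\eta!}$ and applying the Stirling approximation $\ln\frac{(\eta-l)!^2}{(\eta-2l)!\,\eta!}\approx l\ln\!\bigl(1-\tfrac{l}{\eta}\bigr)+(\eta-l)\ln\!\bigl(1-\tfrac{l}{\eta}\bigr)-\ldots$ (or, more cleanly, $1-p^\star\approx(1-l/\eta)^{2l}$ for $l\ll\eta$), I would extract a closed‑form relation such as $l \gtrsim \tfrac{1}{2}\eta\bigl(1-(1-p^\star)^{1/(2l)}\bigr)$ and exhibit at least one pair $(\eta,l)$ satisfying it, e.g.\ fix a realistic $\eta$ (thousands) and solve numerically for the smallest $l$. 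Because the right‑hand side of Lemma~\ref{lem-2} is monotone in $l/\eta$, every larger ratio also works, so the set of admissible $(\eta,l)$ is non‑empty and in fact an open region.

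A subtlety I would address explicitly is that the PPT scheme requires more than $e$ shared keys (not merely one) to form a communication key. I would argue this only changes $p^\star$ by a constant factor: the probability that two clients share at least $e+1$ keys is a tail of a hypergeometric distribution, and by enlarging the key‑ring parameter $l$ (or equivalently raising $p$ in Lemma~\ref{lem-2}) one can compensate. Moreover, the path‑key rescue step in Phase~0---where a third‑party client hands unused keys to any directly connected pair without shared keys---can only add edges to the secure‑connectivity graph, so the bound from Lemma~\ref{lem-1} remains a valid lower bound on $P_c$.

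The main obstacle I anticipate is the $e$‑shared‑keys strengthening: turning Lemma~\ref{lem-2}, which is stated for ``at least one'' shared key, into a usable tail bound for ``at least $e{+}1$'' shared keys. I would handle this by replacing $p$ in Lemma~\ref{lem-2} with $p_e := \Pr[|\text{shared keys}|\ge e+1]$, bound $p_e$ from below via the hypergeometric CDF (or a Poisson approximation when $l^2/\eta$ is moderate), and then choose $(\eta,l)$ so that $p_e\ge p^\star$. The remainder is then routine: plug into Lemma~\ref{lem-1} to conclude $\Pr[\text{secure graph connected}]\ge P_c$, which is exactly the ``private data transmission'' requirement of Theorem~\ref{them-2}.
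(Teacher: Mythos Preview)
Your approach is correct and matches the paper's: the paper simply declares the result ``obvious'' and points to Lemmas~\ref{lem-1} and~\ref{lem-2} without carrying out any of the inversions or approximations you outline. Your treatment is in fact more careful than the paper's own---in particular, the $e$-shared-keys threshold and the path-key rescue step are subtleties the paper does not address at all.
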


The proof of {\it Theorem \ref{them-2}} is obvious. It is possible for P2P networks to achieve a specific probability $ p $ based on the communication key establishment scheme, for example, 0.999. 
Similarly, the sizes of the key pool and the key ring can be adjusted according to actual demands to achieve both higher security and the inter-client higher connectivity rates according to {\it Lemma \ref{lem-2}}. Thus, the whole data transmission process is well protected.

\section{experimental design}

In this section, we conduct the simulation experiments for the proposed PPT protocol in a spam classification scenario. All simulations are implemented on the same computing environment(Linux Ubuntu 16.04, Intel i7-6950X CPU. 62 GB RAM, and 3.6TB SSD) with Tensorflow, Keras and, PyCryptodome.

In our experiments, There are 200 potential clients in P2P networks and only half of them are target clients. The communication keys are established for all potential clients.
We first train a central original model $ M^{0} $ and distribute $ M^{0} $ to all target clients. Then, the 100 target clients collaboratively train a global model based on the proposed PPT protocol. We also design a dropout simulation to evaluate the robustness.
In the remainder of this section, we give the details of our experiments.

\noindent{\bfseries Database } 

The database used in our experiments consists of two different parts. One part is  
Trec06p which contains 37822 English emails from the real world in 2006. There are 12910 hams and 24912 spams in the main corpus with messages. The other part is Trec07 which is also a real-world English email database consisting of 25220 hams and 50199 spams.

\subsection{Simulation of model training}

Firstly, we train a central original model using Convolutional Neural Network (CNN) \cite{CNN15} for spam classification. In the beginning, we construct a word vocabulary for Trec06p. For each sample, we generated a corresponding word embedding matrix. Then, we divide the Trec06p database into a training set and a testing set in a 3:1 scale uniformly. The training samples are sent to a CNN consisting of two convolution layers, two pooling layers, and three fully connected layers. We set the loss function as the {\em cross-entropy error} and the active function as the {\em sigmoid}. We use SGD for gradient descent, where the learning rate is 0.1. And the central original model is saved as $ M^{0} $. The training process is given in Algorithm 1.

\begin{algorithm}  
	\caption{Original Model Training}  
	\begin{algorithmic}[1] 
		\Require $D$ is the Trec06p dataset; $Z$ is cross-validation times; $CNN$ contains two convolution layers, two pooling layers, and three fully connected layers; $SGD$ is the stochastic gradient descent algorithm.			  
		\Ensure the original model $M^{0}$; $trainSet$ and $testSet$ are the training set and testing set; evaluation result $res$.	
		
		\State $\left\{ Matrix \right\} \gets $ (generate word embedding matrix for every sample in $ D $);	
		
		\State $(trainSet, testSet) \gets $ split $\left\{ Matrix \right\}  $;
		
		\State $S_{i} \gets $ (split $trainSet$ in equal parts of $Z$);
		
		\For {each fold $i=1,2,...,Z$}:
		
		\State $\left\{vSet, tSet\right\} \gets \left\{S_{i}, S-S_{i}\right\} $;
		
		\For {each epoch}:
		
		\State $M_{i}^{0} \gets $ modelFit($CNN,SGD, tSet$);
		
		\State $r_{i} \gets $ modelEvaluate($m_{t}, vSet$);
		
		\EndFor
		
		\EndFor
		
		\State $M^{0} \gets $ averageModel($\left\{ (M_{i}^{0}, r_{i}) | i=1,2,...,Z \right\}$);
		
		\State $res \gets $ modelEvaluate($M^{0}, testSet$);

	\end{algorithmic}  
\end{algorithm} 

Secondly, we survey the e-mail amounts of 100 Gmail users. Our user study involves 100 participants, including 59 males and 41 females whose ages range from 16 to 85. All participants come from different regions and countries, including different skin tones. The participants are recruited using the questionnaire www.wjx.cn by WeChat.

\begin{algorithm}  
	\caption{Local Training}  
	\begin{algorithmic}[1] 
		\Require  $M^{0}$ is the central original model; $100$ clients are indexed by $i$; $D_{i}$ is the data set for each client $ u_{i} $. 
		
		\Ensure the local model set $\{m_{i}|i=0,1,...,99 \}$.
		
		\For {each client $ u_{i} $}:
		
		\State Initialize the central original model $ M^{0} $ on local;
		
		\State $trainSet_{i} \gets $ ($ D_{i} $);
		
		\State $m_{i} \gets $ LocalmodelFit($M^{0},CNN,SGD,trainSet_{i}$);	
		
		\EndFor 
		
	\end{algorithmic}  
\end{algorithm}

We split the Trec07 database into two parts. A portion is assigned to 100 clients as training sets to train their local models. Every client's spam and ham e-mail amounts are assigned strictly according to the results of our user study. Another portion is used as a testing set to evaluate the proposed PPT protocol. 

Thirdly, we simulate the local training processes. 100 target clients re-train $ M^{0} $ using their local training sets respectively. We give the first round local training algorithm flow in Algorithm 2.

\subsection{Simulation of data aggregation}

We simulate the interactions of the proposed PPT protocol. The dropout situation is considered, too. We take the assumption that a fixed number of target clients drop out to reveal the robustness against dropout. 

Firstly, we randomly generate 200 potential clients and their connections to simulate a P2P network topology. Next, we construct communication keys for the 200 potential participants. According to the E-G scheme in the proposed PPT protocol, we set the size of the key pool as 2000 and the key ring size 20.  
We illustrate the connections among potential participants based on the communication keys as the gray lines in Fig. 4. We also generate 100 pairs of secret and public keys $ \{(SK_{i},PK_{i})\vert i=0,1,...,99\} $ for signature.

\begin{figure}[htbp] 
	\centering 
	\includegraphics[scale=0.1]{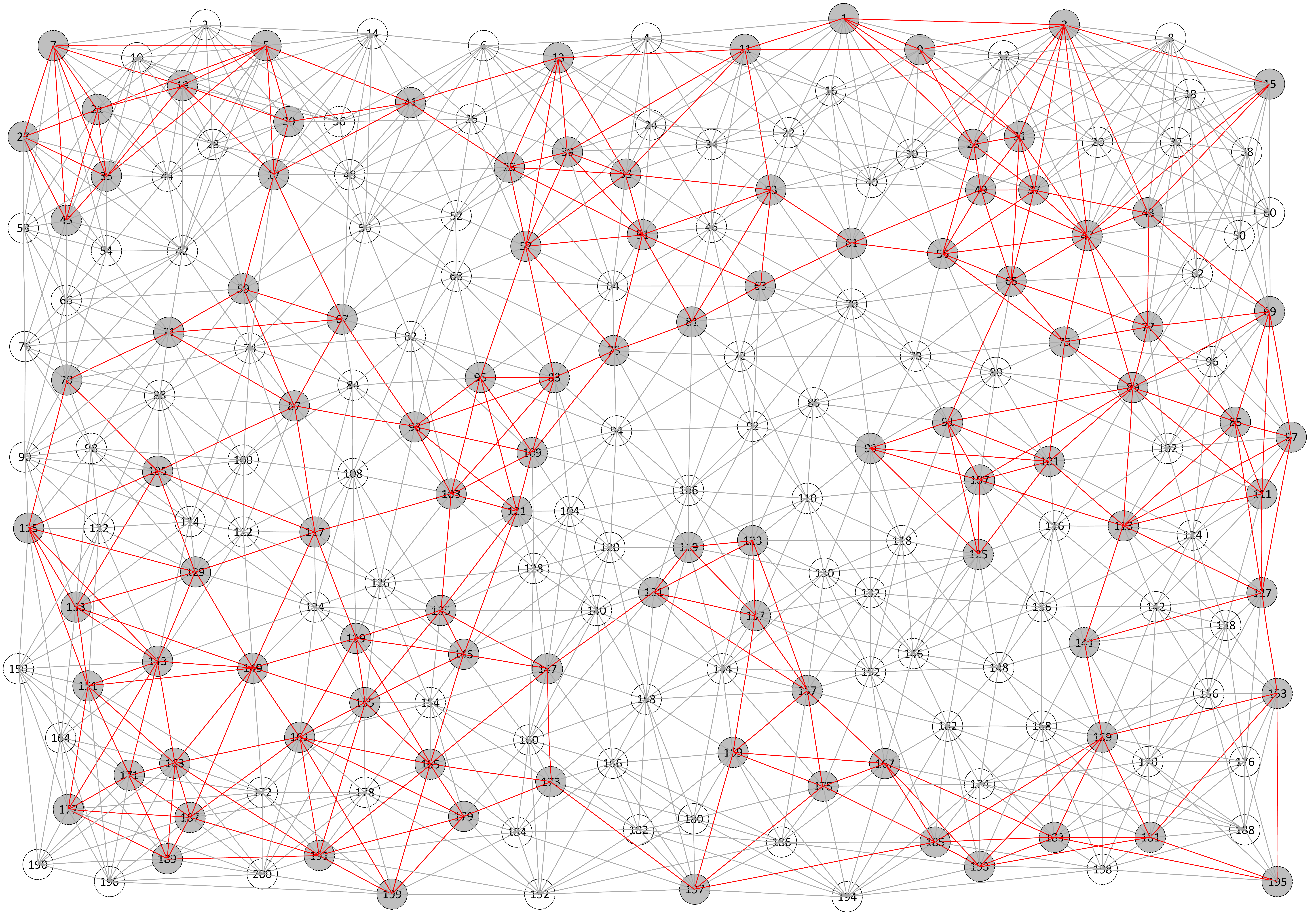} 
	\caption{Clients in a P2P network.} 
\end{figure}

Secondly, we randomly select 100 target clients $ \{u_{i}, i=0,1,...,99\} $, which are shown as the gray node in Fig. 4. Then, we mark one of them as the leader client $u_{0}$. A data transmission route for the 100 target clients according to the {\em depth-first search algorithm} is confirmed subsequently. We illustrate the transmission route as the red line in Fig. 4. 

A noise $s$ is generated to disturb the data of the leader client. In the experiments, we take the AES \cite{AES95} as the encryption algorithm, where the keys are the communication keys. And the signature scheme is designed to use the ElGamal-based signature algorithm \cite{ElGamal85}.

Finally, we aggregate the encoding model update parameters according to the designed transmission route and encryption algorithm and complete the first round of global model training:

\begin{eqnarray}
M^{1}=\frac{\sum_{i=0}^{99} \omega_{i}x_{i}}{\sum_{i=0}^{99} \omega_{i}}+M^{0}.
\end{eqnarray}
Through several rounds of global model training, the global model will converge. 

Afterward, we design a series of dropout clients as the contrast experiments to demonstrate the robustness of the proposed PPT protocol. Besides, we evaluate the efficiency of our privacy-preserving method based on noise addition compared to secret sharing which is deployed in Google's Secure Aggregation protocol \cite{FL17aggregation}. 

\section{Experimental result and Performance}
In this section, we present the experimental results and evaluate the performance of the proposed PPT protocol in terms of correctness, robustness, and efficiency.
The final updated global model is accurate in classifying the samples both in the trec07 testing set and the trec06 testing set. 

After 14 epochs of training in the trec06p training set, the central original model achieves the accuracy of 99.99\% and 99.88\% classifying the samples in the trec06p validation set and testing set, respectively. When classifying the samples in the trec07 testing set, the accuracy is only 76.10\%. In the no dropout experiment settings, the global model converges at the 14th round, which is shown in Fig. 5(a).
And the accuracy of the final global model is 99.12\% in the trec07 testing set, which can be seen in Fig. 5(b). Meanwhile, the final updated global model performs well in the trec06p testing set, whose accuracy is 90.95\%. In Fig. 5., we compare the performances of the central original model and the final global model by illustrating the receiver operating characteristic curve (ROC) and the area under the curve (AUC). The final global model performs better than the central original model in the trec07 and trec06p testing set.

\begin{figure}[htbp]
	\centering
	\subfigure[Loss value in Trec07 testing set.]{
		\begin{minipage}[t]{0.5\linewidth}
			\centering
			\includegraphics[width=1.6in]{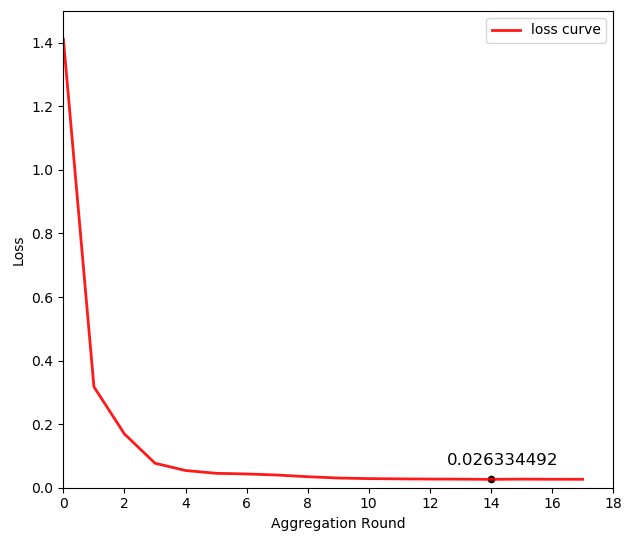}
		\end{minipage}%
	}%
	\subfigure[Accuracy in Trec07 testing set.]{
		\begin{minipage}[t]{0.5\linewidth}
			\centering
			\includegraphics[width=1.6in]{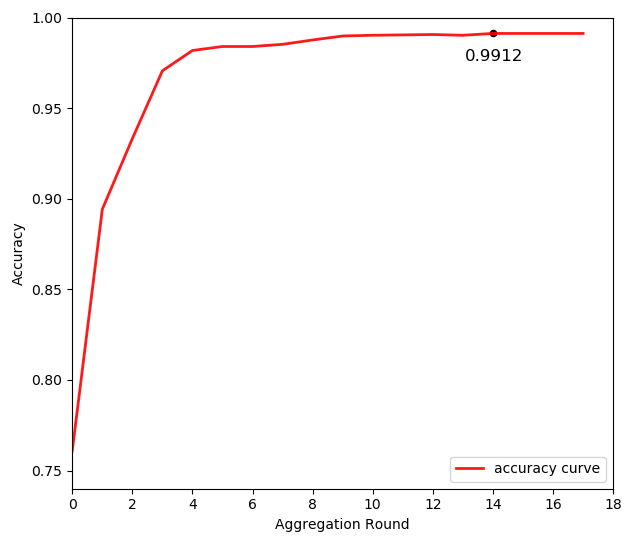}
		\end{minipage}%
	}%
	\centering
	\caption{Performance of the final global model in the Trec07 testing set.}
\end{figure}

\begin{figure}[htbp]
	\centering
	\subfigure[ROC in Trec06p testing set]{
		\begin{minipage}[t]{0.5\linewidth}
			\centering
			\includegraphics[width=1.6in]{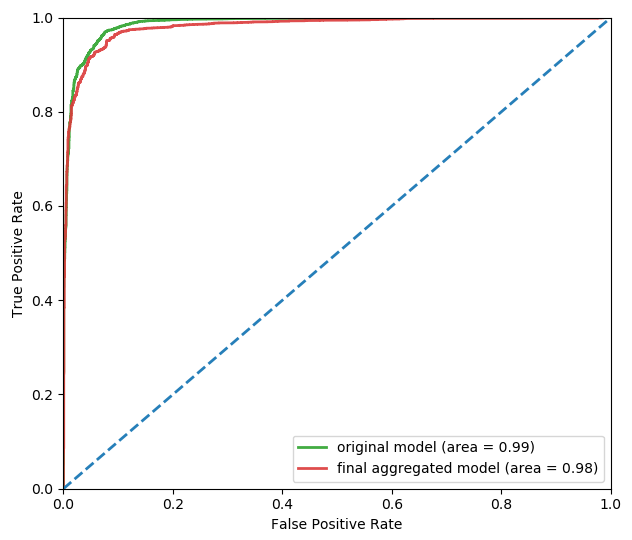}
		\end{minipage}%
	}%
	\subfigure[ROC in Trec07 testing set]{
		\begin{minipage}[t]{0.5\linewidth}
			\centering
			\includegraphics[width=1.6in]{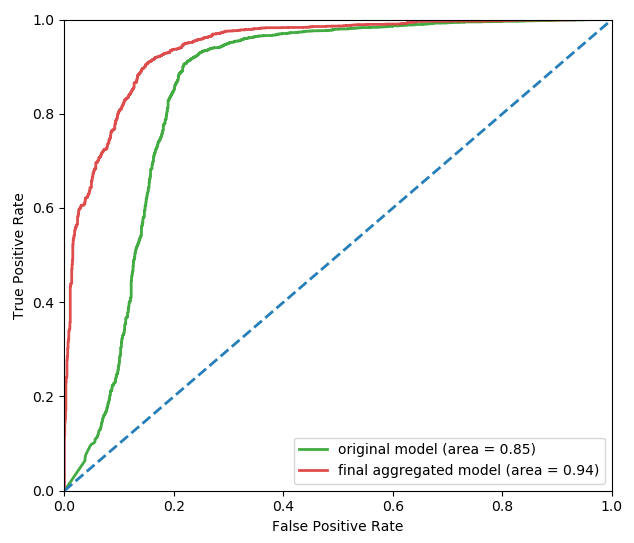}
		\end{minipage}%
	}%
	\centering
	\caption{ROC of the original model and final global model.}
\end{figure}

While randomly choosing dropout clients, the updated models still have good performances. Fig.6. shows the accuracies in the trec07 testing set when the amounts of dropout clients are 1, 5, 10, and 15 respectively. 
Despite dropout, the final global model still achieves an accuracy of 91.02\% at least. 
What's more, the communication key and global model aggregation form support the dropout situation. 
The averaging amounts of one-hop transmission in one aggregation round are shown in TABLE II. 

\begin{figure}[htbp] 
	\centering 
	\includegraphics[scale=0.35]{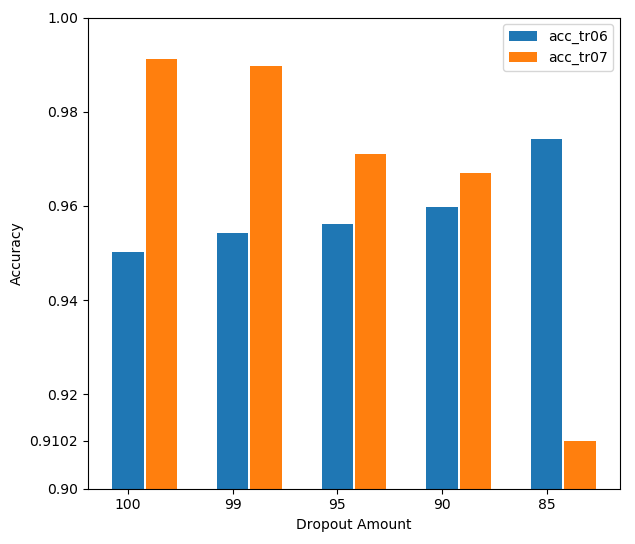} 
	\caption{Accuracy when meeting dropout situation.} 
\end{figure}

\begin{table}[htbp] 
	\centering  
	\caption{Averaging Data transmission amount} 
	\begin{threeparttable}  
		\begin{tabular}{|p{2.5cm}<{\centering}|p{0.5cm}<{\centering}|p{0.5cm}<{\centering}|p{0.5cm}<{\centering}|p{0.5cm}<{\centering}|p{0.5cm}<{\centering}|} 

			\hline  
			Client amount & 100 & 99 & 95 & 90 & 85 \\ 
			\hline 
			Transmission amount & 190 & 185 & 183 & 171 & 163  \\
			
			\hline

		\end{tabular}

	\end{threeparttable}  
\end{table}  

The computational performance for one client is exhibited in TABLE III. 
Compared to the secret sharing deployed in Google's Secure Aggregation protocol, The PPT protocol requests less computational resources and offers more efficiency for clients. 

\begin{table}[h] 
	\centering  
	\caption{performance} 
	\begin{threeparttable}  
		\begin{tabular}{m{1cm}<{\centering}|m{3.4cm}<{\centering}|m{3cm}<{\centering}} 
			
			\hline  
			\hline  
			\makecell[c]{Index} & \makecell[c]{Operations\tnote{1}} & \makecell[c]{Time(ms/1000byte)} \\ 
			\hline 
			\makecell[c]{1} & \makecell[c]{Secret sharing} & \makecell[c]{23.3164}\\  
			\hline  
			\makecell[c]{2} & \makecell[c]{Secret reconstruction} & \makecell[c]{9.8632}\\  
			\hline  
			\makecell[c]{3} & \makecell[c]{Noise generation} & \makecell[c]{1.2548}\\  
			\hline  
			\makecell[c]{4} & \makecell[c]{Noise addition} & \makecell[c]{0.1422}\\  
			\hline  
			\makecell[c]{5} & \makecell[c]{Noise subtraction} & \makecell[c]{0.1506}\\  
			\hline  
			\makecell[c]{6} & \makecell[c]{Encryption(AES-128bit) } & \makecell[c]{170.8248}\\
			\hline  
			\makecell[c]{7} & \makecell[c]{Decryption(AES-128bit) } & \makecell[c]{0.0282}\\
			\hline  
			\makecell[c]{8} & \makecell[c]{Signature(Elgamal-2048bit)} & \makecell[c]{0.0003}\\
			\hline
			\makecell[c]{9} & \makecell[c]{Verification(Elgamal-2048bit)} & \makecell[c]{0.0071}\\  
			\hline  
			\hline 
			
		\end{tabular}
		
		\begin{tablenotes}
			\footnotesize
			\item[1] We execute a series of operations on a 114MB file of local model updating parameters in the form of plaintext. And the encrypted result is 440.1MB.   
		\end{tablenotes}
	\end{threeparttable}  
\end{table}  

\section{Limitation and Future work}
In this section, we briefly discuss the limitation of the proposed PPT protocol. We also have a vision of the future works.

PPT is designed for the {\em FL in P2P networks} where clients are usually not directly connected with a central server. Despite inheriting the data transmission speed from P2P networks, the one-hop transmission form is obviously less efficient. When the amount of participants is larger enough, the communication efficiency of PPT is not so high. What's more, the communication key establishment scheme is more proper for a static network. In some dynamic P2P networks, especially Internet of Vehicles (IoV) and Ad Hoc Networks scenarios, clients are always mobile. Thus, the future research direction point to the efficiency of {\em federated learning in large-scale and dynamic P2P networks}. 

The supervision and report mechanism is based on a pure strategy game, i.e., adversaries' strategies are equiprobable. Nevertheless, adversaries' strategies are complex in practice, especially the probability of different strategies. Improving the supervision and reporting mechanism based on Game Theory more in line with the real scenario will be our next research.

Besides, ecological validity is a challenge to our user study. Our study mainly recruits students in the university. These participants are usually more active in using e-mail applications. Thus the performance evaluation may vary with other populations. In future works, we will conduct a large-scale user study involving more participants to perform a more intensive evaluation of our protocol.

\section{Conclusion}

The status lacking a trusted central server necessitates the development of {\em federated learning in P2P networks}. This research provides an instance of such a context, along with guarantees on both its communication efficiency and privacy. While there have been prior works on federated learning and collaboratively training, our work is the first to provide a privacy-preserving global model training protocol in the none central aggregator setting from the secure and efficient transmission perspective. Besides, our training protocol is dropout-robust, which is of practical significance. And our experiments conducted on two real-world datasets suggest that our training protocol is practical.


%

\appendices

\ifCLASSOPTIONcaptionsoff
  \newpage
\fi

%






\end{document}